\documentclass{article}

%

 \usepackage[nonatbib,final]{nips_2016}

\usepackage[utf8]{inputenc} 
\usepackage[T1]{fontenc}    
\usepackage[hidelinks]{hyperref}       
\usepackage{url}            
\usepackage{booktabs}       
\usepackage{amsfonts}       
\usepackage{nicefrac}       
\usepackage{microtype}      

\usepackage{amsmath,amssymb,bm,dsfont,enumitem,amsthm}
\usepackage{color}
\usepackage{xfrac}
\usepackage{epstopdf}
\usepackage{algorithm}
\usepackage[noend]{algpseudocode}

\newcommand{\wsig}{q}
\newcommand{\Wsig}{Q}

\newcommand{\nosig}{\bar{q}}
\newcommand{\Nosig}{\bar{Q}}

\newcommand{\needcite}[1]{}

\newcommand{\be}{\begin{equation}}
\newcommand{\ee}{\end{equation}}
\newcommand{\benn}{\begin{equation*}}
\newcommand{\eenn}{\end{equation*}}
\newcommand{\bea}{\begin{eqnarray*}}
\newcommand{\eea}{\end{eqnarray*}}
\newcommand{\bean}{\begin{eqnarray}}
\newcommand{\eean}{\end{eqnarray}}

\newcommand{\bb}{\boldsymbol{b}} 
\newcommand{\cc}{\boldsymbol{c}}
\newcommand{\ev}{\boldsymbol{e}}
\newcommand{\piv}{\boldsymbol{\pi}}

\newcommand{\rr}{\boldsymbol{r}}

\newcommand{\ignore}[1]{}

\renewcommand{\eqref}[1]{Eq.~(\ref{#1})}

\newcommand{\secref}[1]{Sec. \ref{#1}}

\newtheorem{theorem}{Theorem}[section]

\newtheorem{corollary}[theorem]{Corollary}

\newcommand{\commentout}[1]{}

\newcommand{\red}[1]{}
\newcommand{\blue}[1]{}

\newcommand{\darkgray}[1]{{\textcolor[RGB]{120,120,120}{#1}}}
\newcommand{\todo}[1]{}

\DeclareMathOperator*{\argmax}{argmax}

\newcommand{\inner}[1]{\langle {#1} \rangle}

\renewcommand{\k}{{(k)}}
\newcommand{\kk}{{(k+1)}}
\newcommand{\sub}{S}
\newcommand{\itm}{\omega}
\newcommand{\itms}{\Omega}
\newcommand{\tg}{\tau}
\newcommand{\tgs}{T}
\newcommand{\1}[1]{\mathds{1}_{\{{#1}\}}}
\newcommand{\prob}{\mathbb{P}}

\newcommand{\probargsub}[2]{{\prob}_{#2}\left[{#1}\right]}

\makeatletter
\algrenewcommand\ALG@beginalgorithmic{\normalsize}
\algrenewcommand\algorithmiccomment[2][\small]{{#1\hfill\(\triangleright\) \darkgray{#2}}}
\makeatother

\title{Optimal Tagging with Markov Chain Optimization}

%

\author{
Nir Rosenfeld\\
  School of Computer Science and Engineering\\
  Hebrew University of Jerusalem\\
  \texttt{nir.rosenfeld@mail.huji.ac.il} \\
  \And
 Amir Globerson \\
 The Blavatnik School of Computer Science  \\
 Tel Aviv University \\
 \texttt{gamir@post.tau.ac.il} \\
}

\begin{document}

\maketitle

\begin{abstract}
Many information systems use tags and keywords to describe and annotate content. These allow for efficient organization and categorization of items, as well as facilitate relevant search queries. As such, the selected set of tags
 for an item can have a considerable effect on the volume of traffic that eventually reaches an item.

In settings where tags are chosen by an item's creator, who in turn is interested in maximizing traffic, a principled approach for choosing tags can prove valuable. In this paper we introduce the problem of optimal tagging, where
 the task is to choose a subset of tags for a new item such that the probability of a browsing user reaching that item is maximized.

We formulate the problem by modeling traffic using a Markov chain, and asking how transitions in this chain should be modified to maximize traffic into a certain state of interest. The resulting optimization problem
involves maximizing a certain function over subsets, under a cardinality constraint.

We show that the optimization problem is NP-hard, but nonetheless has a simple
$(1-\frac{1}{e})$-approximation via a simple greedy algorithm. Furthermore, the structure of the problem allows for an efficient implementation of the
greedy step.
To demonstrate the effectiveness of our method, we perform experiments on three  tagging datasets, and show that the greedy algorithm outperforms other baselines.
\end{abstract}


\section{Introduction}
\label{sec:intro}

To allow for efficient navigation and search,
modern information systems rely on the usage of 
non-hierarchical tags, keywords, or labels
to describe items and content.
These tags are then used either explicitly by users when searching for content,
or implicitly by the system to augment search results or to
recommend related items.

Many online systems where users can create or upload content support tagging.
Examples of such systems are
media-sharing platforms,
social bookmarking websites,
and consumer to consumer auctioning services.
Typically, an item's creator is free to select any set of tags or keywords which 
they believe best describe their item, where the only concrete limitation
is on the number of tags, words, or characters used.
Tags are often chosen on a basis of their ability to best describe,
classify, or categorize items and content.
By choosing relevant tags, users aid in creating a more organized information system.
However, content creators may have their own individual objective,
such as maximizing the exposure of other users to their items.

This suggests that choosing tags should in fact be done strategically.
For instance, tagging a song as `Rock' may be informative,
but will probably only contribute marginally to the song's traffic,
as the competition for popularity under this tag can be fierce.
On the other hand, choosing a unique, obscure tag may be appealing,
but will probably not help much either.
Strategic tagging or keyword selection is more clearly exhibited
in search-supporting systems, where keywords are explicitly used
for filtering and ordering search results or ad placements,
and users have a clear incentive of maximizing the exposure of their item.
Nonetheless, their selections are typically heuristic or manual.

Recent years have seen an abundance of work on methods for user-specific
tag recommendations
\cite{hotho2006folkrank,fang2015personalized}.
Such methods aim to support collaborative tagging systems,
where any user can tag any item in the repository.
We take a complementary perspective and focus on taxonomic tagging systems
where only the creator of an item can determine its tags.
In this paper we formalize the task of \emph{optimal tagging} and suggest an
efficient, provably-approximate algorithm.
While the problem is shown to be NP-hard, we prove that the objective
is in fact monotone and submodular, which suggests a straightforward
greedy $(1-\frac{1}{e})$-approximation algorithm \cite{nemhauser1978analysis}.
We also show how the greedy step, which consists of solving a
set of linear equations, can be greatly simplified as well as parallelized,
resulting in a significant improvement in runtime.

We begin by modeling a user browsing a tagged information system as a random walk.
Items and tags act as states in a Markov chain, whose transition probabilities
describe the probability of users jumping between items and tags.
Our framework can incorporate many web search click models
\cite{chuklin2015click}.
Given a new item, our task is to choose a subset of $k$ tags for this item.
When an item is tagged, positive probabilities are assigned
to transitioning from the item to the tag and vice versa.
Our objective is to choose the subset of $k$ tags which will
maximize traffic to that item, namely the probability of
a random walk reaching the item at some point.
Intuitively, tagging an item causes probability to flow
from the tag to the item, on account of other items with this tag.
Our goal is hence to `leach' as much probability mass as possible
from the system as a whole.


As mentioned, we are interested in maximizing the probability of
a random walk reaching the state corresponding to a new item.
Although this measure resembles the notion of the probability
of an item under a stationary distribution
(on which the popular PageRank metric is based),
it is in fact quite different.
First, while a state's stationary probability increases with incoming links,
it may decrease with outgoing links.
Since assigning a tag results in the addition of both an incoming
\emph{and} an outgoing link, using the stationary probability
of an item will lead to an undesired non-monotone objective
\cite{avrachenkov2006effect}.
Second, a stationary distribution does not always exist,
and hence may require modifications of the Markov chain.
Finally, we argue that maximizing the probability of reaching
an item, as opposed to maximizing the proportion of time an infinite random
walk spends in the item's state,
is a better suited objective for the applications we consider.

Although the Markov chain model we propose for optimal tagging
is bipartite, our results apply to general Markov chains.
We therefore first formulate a general problem
in \secref{sec:problem}, where the task is to choose $k$ states to link a new
state to such that the probability of reaching that state is maximal.
Then, in \secref{sec:hardness} we prove that this problem is NP-hard
by a reduction from vertex cover.
In \secref{sec:submod} we prove that for a general Markov chain the
optimal objective is both monotonically non-decreasing and submodular.
Based on this, in \secref{sec:optimization} we suggest a basic greedy
$(1-\frac{1}{e})$-approximation algorithm, and offer a way
of significantly improving its runtime.
In \secref{sec:optimal_tagging} we revisit the optimal tagging problem 
and show how to construct
a bipartite Markov chain for a given tag-supporting information system.
In \secref{sec:experiments} we present experimental results on
three real-world tagging datasets (musical artists in Last.fm,
bookmarks in Delicious, and movies in Movielens)
and show that our algorithm outperforms plausible heuristics.
Concluding remarks are given in \secref{sec:conclusions}.


\section{Related Work}
\label{sec:related}

One the main roles of tags is to
aid in the categorization and classification of content.
The hallmark of tags is that they are not constrained to a fixed vocabulary or structure.
%
An active line of research in tagging systems focuses on the task
of tag recommendations, where the goal is to predict
the set of tags a given user may attribute to an item.
This setting is especially useful for collaborative tagging systems
and folksonomies, where any user can tag any item.
%
Popular methods for tag recommendation are based on random walks \cite{hotho2006folkrank} or
tensor factorization \cite{fang2015personalized}.
While the goal in tag recommendation is also to output a set of tags,
our task is very different in nature.
Tag recommendation is a prediction task for item-user pairs, is based on ground-truth evaluation, and target collaborative tagging systems.
In contrast, ours is an item-centric optimization task for tag-based taxonomies, and is counterfactual in nature.
As such, tag recommendation methods do not apply to our setting. \todo{BiFolkRank?}

A line of work similar to ours is optimizing the PageRank of web pages
in different settings.
In \cite{csaji2010pagerank} the authors consider the problem of
computing the maximal and minimal PageRank value for a set of ``fragile'' links.
The authors of \cite{avrachenkov2006effect} analyze the effects of additional
outgoing links on the PageRank value.
Perhaps the work most closely related to ours is \cite{olsen2010constant},
where a constant-factor approximation algorithm is given for the problem
of maximizing the PageRank value by adding at most $k$ incoming links.
The authors prove that the probability of reaching a web page
is submodular and monotone in a fashion similar to ours (but with a different parameterization),
and use it as a proxy for PageRank.

Links between absorbing Markov chains and submodular optimization
have been studied for 
opinion maximization \cite{gionis2013opinion}
and for computing centrality measures \cite{mavroforakis2015absorbing}
in networks.
Following the classic work of Nemhauser \cite{nemhauser1978analysis},
submodular optimization is now a very active line of research.
Many interesting optimization problems across diverse domains have been
shown to be submodular.
Examples of these are
sensor placement \cite{krause2008near} and
influence maximization in social networks \cite{kempe2003maximizing},
to name a few.



\section{Problem Formulation}
\label{sec:problem}

Before we present our approach to optimal tagging,
we first describe a general ocombinatorial ptimization task over Markov chains,
of which our task is a special case.
Consider a Markov chain over $n+1$ states.
Assume there is a state $\sigma$ for which we would like to add a set of $k$ new incoming transitions. In the tagging
problem $\sigma=n+1$ will be an item (e.g., song or product) and the incoming transitions will be from possible tags
for the item, or from related items.

The optimization problem is then to choose a subset $\sub \subseteq [n]$ of $k$ states so as to maximize the probability of visiting $\sigma$ at some point in time.
Formally, let $X_t$ be the random variable corresponding to the state of the Markov chain at time $t$.
Then the optimal tagging problem is:
\begin{align}
\max_{S \in [n],\,|S| \le k} \probargsub{X_t=\sigma \text{ for some } t \ge 0}{\sub}
\label{eq:objective}
\end{align}
At first glance, it is not clear how to compute the objective function
in \eqref{eq:objective}. However, with a slight modification of the Markov chain,
the objective function can be expressed as a simple function of the Markov chain parameters, as explained next.

In general, $\sigma$ may have outgoing edges, and random walks
reaching $\sigma$ may continue to other states afterward.
Nonetheless, as we are only interested in the probability of \emph{reaching} $\sigma$,
the states visited after $\sigma$ have no effect on our objective.
Hence, $\sigma$'s outgoing edges can be safely replaced with a single self-edge
without affecting the probability of reaching $\sigma$.
This essentially makes $\sigma$ an \emph{absorbing state},
and our task becomes to maximize the probability of the Markov
chain being absorbed in $\sigma$.
In the remainder of the paper we consider this equivalent formulation.

When the Markov chain includes other absorbing states, 
optimizing over $\sub$ can be intuitively thought of as trying
to ``leach'' as much probability mass from the contending absorbing states
to $\sigma$, under a budget on the number of states $\sigma$
can be connected to.\footnote{
In an ergodic chain with one absorbing state,
all walks reach $\sigma$ w.p. 1, and the problem becomes trivial.}
As we discuss in Section \ref{sec:optimal_tagging},
having contending absorbing states
arises naturally in optimal tagging.

To fully specify the problem, we need the Markov chain parameters. Denote the initial distribution by $\piv$.
 For the transition probabilities, each node $i$ will have two sets of transitions: one when it is allowed to transition to $\sigma$ (i.e., $i \in \sub$) and one when 
 no transition is allowed.
Using two distinct sets is necessary since in both cases outgoing probabilities
must sum to one.
We use $\wsig_{ij}$ to denote the transition probability from state $i$ to $j$ when transition to $\sigma$ is allowed,
and $\nosig_{ij}$ when it is not.
We also denote the corresponding transition matrices by $\Wsig$ and $\Nosig$.
Note that $\nosig_{i \sigma}=0$ for all $i \in [n]$.

It is natural to assume that when adding $\sigma$, transition into $\sigma$ will become more likely, and transition to other states can only be less likely. Thus, we add the assumption that:
\be
\forall i \ , \ \forall j \neq \sigma \ : \ \wsig_{ij} \leq \nosig_{ij}
\label{eq:ge_assumption}
\ee  

Given a subset $S$ of states from which transitions to $\sigma$ are allowed,
we construct a new transition matrix, taking corresponding rows from
$\Wsig$ and $\Nosig$.
We denote this matrix by $\rho(S)$, with
\be
\rho_{ij}(S) = 
\left\{
\begin{array}{ll}
\wsig_{ij} & i \in S \\
\nosig_{ij} & i \notin S \\
\end{array}
\right.
\label{eq:rho_q}
\ee

In what follows, we focus on the optimization problem in \eqref{eq:objective}. \secref{sec:hardness} shows that it is NP hard. \secref{sec:submod} then shows that the objective of \eqref{eq:objective} is monotone and submodular and therefore the optimization problem has a $1-\frac{1}{e}$ factor approximation via a simple greedy algorithm. 

\ignore{
When a state $i$ is chosen to be in $S$,
the transition probability $\rho_{i\sigma}$ 
from $i$ to $\sigma$ can only increase (or remain unchanged).
As such an increase can only come at the expense of other states,
all other non-negative transition probabilities $\rho_{ij}$
to states $j \in [n]$ must decrease.
To model this, we assume $\rho$ describes the transition probabilities for when
all edges from states $i \in [n]$ to $\sigma$ exist.
Hence, when state $i$ is \emph{not} chosen to be in $\sub$,
after normalization the $i^{\text{th}}$ row in $\rho$ becomes
$\frac{\rho_{ij}}{1-\rho_{i\sigma}}$,
since the sum of that row is reduced from 1 to $(1-\rho_{i \sigma})$.\\
\todo{generalize - have $\bar{A}$ be without $\sigma$,
and $A$ with $\sigma$ (or other way around if it makes more sense)}\\
\todo{cascade click model and others}
}
\ignore{
While the transition probability $\rho_{i \sigma}$ increases when $i \in S$,
it is not immediately clear what effect this has on
the probability of reaching $\sigma$.
This is because all walks to $\sigma$ passing through an edge $(i,j)$
for some $j \neq \sigma$ can potentially have lower probabilities due
to a reduced value of $\rho_{ij}$.
In section Sec. \ref{subsec:monotonicity} we prove that adding states to $S$
can only increase the probability of reaching $\sigma$,
or in other words, that the objective function is non-decreasing in $S$.\\

Even thought the objective function can be computed efficiently,
In the next section we show that its optimization is NP-hard.

}


\section{NP-Hardness}
\label{sec:hardness}

We now show that for a general Markov chain, the optimal tagging problem in \eqref{eq:objective} 
is NP-hard by a reduction from vertex cover. Given an undirected graph $G=(V,E)$ with $n$ nodes
as input to the vertex cover problem, we construct an instance of optimal tagging
such that there exists a vertex cover $S \subseteq V$ of size at most $k$ iff
the probability of reaching $\sigma$ reaches some threshold.

To create the absorbing Markov chain,
we create a transient state $i$ for every node $i \in V$,
and add two absorbing states $\varnothing$ and $\sigma$.
We set the initial distribution to be uniform, and for some $0<\epsilon <1$ set the transitions for transient states $i$ as follows:
\be
\wsig_{ij} = 
\left\{
\begin{array}{ll}
1 & j = \sigma \\
0 & j \neq \sigma
\end{array}
\right.
, \qquad
\nosig_{ij} = 
\left\{
\begin{array}{cl}
0 & j = \sigma \\
\epsilon & j = \varnothing \\
\frac{1-\epsilon}{deg(i)} & {\mbox{otherwise}}
\end{array}
\right.
\ee

\ignore{
\emph{unnormalized} transition matrix $\tilde{\rho}$:
for all $i,j \in V$ we set $\tilde{\rho}_{ij} = \frac{1-\epsilon}{d_i}$,
where $d_i$ is $i's$ degree in $G$ and $0 < \epsilon <1$ is chosen arbitrarily,
and $\tilde{\rho}_{i \varnothing}=\epsilon$ and $\tilde{\rho}_{i \sigma} = \infty$\footnote{
To be precise, since the problem is discrete we can set
$\tilde{\rho}_{i \sigma} = \frac{1}{\delta}$ for a sufficiently small $\delta$ for which the argmax is the same.}. \red{in the greedy algorithm we need to invert $\rho$ or solve equations, but with $\infty$ the corresponding row is all zeros and $\rho$ becomes singular - does this matter?}
Given $\sub$, the normalized transition matrix $\rho$ will have $\rho_{ij}=\tilde{\rho}_{ij}$,
$\rho_{i \varnothing} = \epsilon$, and $\rho_{i \sigma}=0$ if $i \not \in \sub$,
and $\rho_{ij} = \rho_{i \varnothing} = 0$ and $\rho_{i \sigma}=1$ if $i \in \sub$.
Finally, we set $\pi_0$ to be uniform.
}
Let $U \subseteq V$ of size $k$, and $\sub (U)$ the set of states corresponding to the nodes in $U$.
We claim that $U$ is a vertex cover in $G$ iff the probability of reaching $\sigma$ when $\sub (U)$ is chosen
is $1-\frac{(n-k)}{n}\epsilon$.

Assume $U$ is a vertex cover. For every $i \in \sub(U)$, as noted a walk starting in $i$ will reach $\sigma$
with probability 1. For every $i \not \in \sub(U)$, with probability $\epsilon$
a walk will reach $\varnothing$ in one step, and with probability $1-\epsilon$ it will visit
one of its neighbors $j$. Since $U$ is a vertex cover, it will then reach $\sigma$ in
one step with probability 1. Hence, in total it will reach $\sigma$ with probability $1-\epsilon$.
Overall, the probability of reaching $\sigma$
is $\frac{k+(n-k)(1-\epsilon)}{n} = 1-\frac{(n-k)}{n}\epsilon$ as needed.
Note that this is the maximal possible probability of reaching $\sigma$ for \emph{any}
subset of $V$ of size $k$.

Assume now that $U$ is not a vertex cover, then there exists an edge $(i,j) \in E$
such that both $i \not \in \sub (U)$ and $j \not \in \sub (U)$.
A walk starting in $i$ will reach $\varnothing$ in one step with probability $\epsilon$,
and in two steps (via $j$) with probability $\epsilon \cdotp \nosig_{ij} > 0$. Hence,
it will reach $\sigma$ with probability strictly smaller than $1-\epsilon$,
and the overall probability of reaching $\epsilon$ will be strictly smaller than $1-\frac{(n-k)}{n}\epsilon$.


\section{Proof of Monotonicity and Submodularity} 
\label{sec:submod}

Denote by $\probargsub{A}{\sub}$ the probability of event $A$
when transitions from $S$ to $\sigma$ are allowed. We define:
\begin{flalign}
& c_i^\k(\sub) = \probargsub{\text{$X_t = \sigma$ for some $t\leq k$} | X_0 = i}{\sub} \\
& c_i(\sub) = \probargsub{\text{$X_t = \sigma$ for some $t$} | X_0 = i}{\sub} 
= \lim\nolimits_{k \rightarrow\infty} c_i^\k 
\label{eq:cis} 
\end{flalign}
For $\cc(\sub) = \left( c_1(\sub), \dots, c_n(\sub) \right)$,
the objective in \eqref{eq:objective} now becomes:
\begin{align}
\max_{\sub \subseteq [n], |\sub| \le k} f(\sub), \qquad f(\sub) = 
\inner{\piv,\cc(\sub)} =
\probargsub{X_t = \sigma \text{ for some $t$}}{\sub}
\label{eq:opttag}
\end{align}
We now prove that $f(\sub)$ 
is both monotonically non-decreasing and submodular.

\subsection{Monotonicity}
\label{subsec:monotonicity}
When a link is from $i$ $\sigma$, the probability of reaching it directly from $i$ goes up.
However, due to the renormalization constraints, the probability of reaching it via other paths may go down.
Nonetheless, our proof of monotonicity shows that the overall probability cannot decrease, as stated next.

\ignore{
As more links are added into $\sigma$, it seems intuitive that the probability of reaching it should increase, and thus $f(S)$ should be non-decreasing.
This however is not trivial, since
adding a new state $i$ to $\sub$ has two distinct effects.
The immediate effect is that a random walk can now reach $\sigma$
in one step from $i$, thus increasing $\sigma$'s traffic volume.
On the other hand, due to renormalization, a walk reaching $i$
has a \emph{lower} probability of visiting other states,
and specifically those in $\sub$.
The indirect effect of this is that 
the probability of reaching $\sigma$ via longer paths 
can actually decrease.
Nonetheless, our proof of monotonicity
shows that the overall probability
cannot decrease, as stated next.
}

\begin{theorem}
For every $k\ge0$ and $i \in [n]$, $c_i^\k$ is non-decreasing.
Namely, for all $\sub \subseteq [n]$ and $z \in [n] \setminus \sub$,
it holds that $c_i^\k(\sub) \le c_i^\k(\sub \cup \{z\})$.
\end{theorem}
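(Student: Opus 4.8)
The plan is to prove the statement by induction on $k$, setting up a recursion for $c_i^\k$ via a first-step analysis. The base case $k=0$ is immediate: since $i \in [n]$ means $i \ne \sigma$, we have $c_i^{(0)}(\sub) = \probargsub{X_0 = \sigma \mid X_0 = i}{\sub} = 0$ for every $\sub$, so the desired inequality holds with equality. For the inductive step, conditioning on the first transition and using time-homogeneity of the chain gives, for every $i$,
\be
c_i^\kk(\sub) = \rho_{i\sigma}(\sub) + \sum_{j \in [n]} \rho_{ij}(\sub)\, c_j^\k(\sub),
\ee
together with the analogous identity for $\sub \cup \{z\}$. The inductive hypothesis is that $c_j^\k(\sub) \le c_j^\k(\sub \cup \{z\})$ for all $j \in [n]$.

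Next I would compare the two recursions term by term, using that $\rho(\sub)$ and $\rho(\sub \cup \{z\})$ agree in every row except row $z$: by \eqref{eq:rho_q}, row $i$ equals $\wsig_{i\cdot}$ when $i \in \sub$ and $\nosig_{i\cdot}$ otherwise, so only row $z$ changes when $z$ is added. This splits the argument into two cases. For $i \ne z$ the two transition rows coincide, so $\rho_{i\sigma}$ cancels and the difference reduces to $c_i^\kk(\sub \cup \{z\}) - c_i^\kk(\sub) = \sum_{j \in [n]} \rho_{ij}(\sub)\,\big[c_j^\k(\sub \cup \{z\}) - c_j^\k(\sub)\big]$, which is nonnegative since every $\rho_{ij}(\sub) \ge 0$ and every bracket is nonnegative by the inductive hypothesis.

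The main obstacle is the case $i = z$, where the changed row and the changed subset interact, so I would separate the two effects. Using $\nosig_{z\sigma}=0$, the $\sub$-recursion reads $c_z^\kk(\sub) = \sum_{j \in [n]} \nosig_{zj}\, c_j^\k(\sub)$, while the $\sub \cup \{z\}$-recursion reads $c_z^\kk(\sub\cup\{z\}) = \wsig_{z\sigma} + \sum_{j \in [n]} \wsig_{zj}\, c_j^\k(\sub\cup\{z\})$. First apply the inductive hypothesis to replace each $c_j^\k(\sub\cup\{z\})$ by the smaller $c_j^\k(\sub)$; it then suffices to show $\wsig_{z\sigma} \ge \sum_{j \in [n]} (\nosig_{zj} - \wsig_{zj})\, c_j^\k(\sub)$. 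By \eqref{eq:ge_assumption} the coefficients $\nosig_{zj} - \wsig_{zj}$ are nonnegative, and since each $c_j^\k(\sub) \le 1$ the right-hand side is bounded by $\sum_{j \in [n]} (\nosig_{zj} - \wsig_{zj})$.

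Finally I would close the inequality using normalization of the transition rows. Because each row of $\Wsig$ and $\Nosig$ sums to one and $\nosig_{z\sigma}=0$, we have $\sum_{j \in [n]} \nosig_{zj} = 1$ and $\sum_{j \in [n]} \wsig_{zj} = 1 - \wsig_{z\sigma}$, so $\sum_{j \in [n]} (\nosig_{zj} - \wsig_{zj}) = \wsig_{z\sigma}$, giving exactly the required bound. This settles the case $i = z$ and completes the induction, establishing $c_i^\k(\sub) \le c_i^\k(\sub \cup \{z\})$ for all $k$, $i$, $\sub$, and $z$.
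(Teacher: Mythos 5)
Your proof is correct and follows essentially the same route as the paper's: induction on $k$ via the one-step recursion, splitting into the cases $i \neq z$ and $i = z$, and closing the $i=z$ case with the three facts $\nosig_{zj} \ge \wsig_{zj}$, $c_j^\k \le 1$, and the row-normalization identity $\sum_{j\in[n]}(\nosig_{zj}-\wsig_{zj}) = \wsig_{z\sigma}$. The only (immaterial) difference is the order of operations in the $i=z$ case: you lower-bound $c_z^\kk(\sub\cup\{z\})$ by applying the inductive hypothesis first and then bound the residual sum, whereas the paper upper-bounds $c_z^\kk(\sub)$ by applying the inductive hypothesis and then using $c \le 1$ on the $\sub\cup\{z\}$ terms.
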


\begin{proof}

\ignore{
For clarity we omit the set notations and denote
$\bar{c}_i^\k=c_i^\k(\sub)$ and $c_i^\k=c_i^\k(\sub \cup \{z\})$,
and similarly $\bar{\rho} = \rho(\sub)$ and $\rho = \rho(\sub)$.
Notice that for all $j$ and any $i \neq z$, $\bar{\rho}_{ij}=\rho_{ij}$.
For $i=z$, $\bar{\rho}_{zj}=\frac{\rho_{zj}}{1-\rho_{z \sigma}}$.
due to the renormalization constraints.
}

We prove by induction on $k$.
For $k=0$, as $\piv$ is independent of $\sub$ and $z$, we have:
\[
c_i^0(\sub) = \piv_\sigma  \1{i=\sigma} = c_i^0(\sub \cup \{z\})
\]
Assume now that the claim holds for some $k \ge 0$. We separate into cases. When $i \neq z$, we have:
\bea
c_i^\kk(S) = \sum_{j=1}^n \wsig_{ij} c_j^\k(S) + \wsig_{i \sigma} 
&\leq& \sum_{j=1}^n \wsig_{ij} c_j^\k(S \cup z ) + \wsig_{i \sigma} = c_i^{\kk}(S \cup z) \\
c_i^\kk(S) = \sum_{j=1}^n \nosig_{ij} c_j^\k(S) 
&\leq& \sum_{j=1}^n \nosig_{ij} c_j^\k(S \cup z )  = c_i^{\kk}(S \cup z)
\eea
for $i \in \sub$ and $i \notin \sub$, respectively.
If $i = z$ then:
\bea
c_i^\kk(S) 
 &\leq& \sum_{j=1}^n \nosig_{ij} c_j^\k(S \cup z )  
 = \sum_{j=1}^n \wsig_{ij} c_j^\k(S \cup z ) +  \sum_{j=1}^n (\nosig_{ij} - \wsig_{ij}) c_j^\k(S \cup z )  \\
& \leq& \sum_{j=1}^n \wsig_{ij} c_j^\k(S \cup z) + \sum_{j=1}^n (\nosig_{ij} - \wsig_{ij})  
 = \sum_{j=1}^n \wsig_{ij} c_j^\k(S \cup z ) + \wsig_{z \sigma}  = c_i^{\kk}(S \cup z) 
\eea
due to to $\nosig_{ij} \geq \wsig_{ij}$, $c \leq 1$, 
$\sum_{j=1}^n \nosig_{ij} = 1$, and $\sum_{j=1}^n \wsig_{ij} = 1 - \wsig_{i\sigma}$.
\end{proof}

\begin{corollary}
\label{cor:non_dec}
$\forall i \in [n]$, $c_i(\sub)$ is non-decreasing, hence
$f(S) = \inner{\piv,\cc(\sub)}$ is non-decreasing.
\end{corollary}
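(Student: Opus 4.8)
The plan is to derive the corollary directly from the theorem by letting $k \to \infty$, and then to propagate monotonicity from the coordinates $c_i$ to their $\piv$-weighted sum $f$. Fix an arbitrary $i \in [n]$, a subset $\sub \subseteq [n]$, and an element $z \in [n] \setminus \sub$. By \eqref{eq:cis} we have $c_i(\sub) = \lim_{k \to \infty} c_i^\k(\sub)$, and similarly for $\sub \cup \{z\}$. I would first note that each of these limits exists: the events $\{X_t = \sigma \text{ for some } t \le k\}$ are nested and increasing in $k$, so $c_i^\k(\sub)$ is non-decreasing in $k$; being also bounded above by $1$, the sequence converges by monotone convergence.

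With existence of the limits in hand, the theorem supplies the inequality $c_i^\k(\sub) \le c_i^\k(\sub \cup \{z\})$ for every finite $k$. Since weak inequalities are preserved under limits, taking $k \to \infty$ on both sides yields $c_i(\sub) \le c_i(\sub \cup \{z\})$, which is exactly the assertion that $c_i$ is non-decreasing in $\sub$. For the statement about $f$, I would expand $f(\sub) = \inner{\piv, \cc(\sub)} = \sum_{i=1}^n \piv_i\, c_i(\sub)$. Because $\piv$ is a probability distribution, $\piv_i \ge 0$ for every $i$, so $f$ is a non-negative linear combination of the coordinate functions $c_i$; as each summand can only increase when $z$ is added to $\sub$, we obtain $f(\sub) \le f(\sub \cup \{z\})$.

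I do not expect a genuine obstacle, since all the substantive work already lives in the theorem. The only points that require care are the two standard facts invoked above, namely that a bounded monotone sequence converges (needed to make sense of the limit defining $c_i$) and that $\le$ passes to the limit. Once these are stated, the non-negativity of the weights $\piv_i$ immediately upgrades coordinate-wise monotonicity to monotonicity of $f$, completing the argument.
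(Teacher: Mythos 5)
Your proof is correct and matches the paper's (implicit) reasoning: the paper states this corollary without proof precisely because it follows by taking $k \to \infty$ in the theorem's inequality and using non-negativity of the entries of $\piv$, exactly as you argue. Your added care about existence of the limit (monotone bounded sequence, via nested events) is a fine touch but not a departure from the intended argument.
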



\subsection{Submodularity}
\label{subsec:submodularity}

Submodularity captures the principle of diminishing returns. 
A function $f(S)$ is submodular if:
\be
\forall \sub \subseteq [n], \,\, z_1,z_2 \in [n] \setminus \sub, \quad
f(\sub \cup \{z_1\}) + f(\sub \cup \{z_2\}) \geq
f(\sub \cup \{z_1, z_2\}) + f(\sub) \nonumber
\label{eq:submod2}
\ee
It turns out that $f(S)$ as defined in \eqref{eq:opttag} is submodular, as shown in following theorem and corollary.
\begin{theorem}
For every $k\ge0$ and $i \in [n]$, $c_i^\k(S)$ is a submodular function.
\end{theorem}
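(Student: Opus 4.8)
The plan is to prove submodularity of $c_i^\k$ by induction on $k$, mirroring the structure of the monotonicity proof. Fix $\sub \subseteq [n]$ and distinct $z_1, z_2 \in [n]\setminus\sub$, and write $\sub_1 = \sub\cup\{z_1\}$, $\sub_2=\sub\cup\{z_2\}$, $\sub_{12}=\sub\cup\{z_1,z_2\}$. The goal at each level is the "defect'' inequality
\[
\Delta_i^\k \;:=\; c_i^\k(\sub_1) + c_i^\k(\sub_2) - c_i^\k(\sub_{12}) - c_i^\k(\sub) \;\ge\; 0
\]
for every $i$. The base case $k=0$ is immediate, since $c_i^0 = \piv_\sigma\1{i=\sigma}$ is independent of the chosen set, so all four terms coincide and $\Delta_i^0=0$. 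For the inductive step I would reuse the one-step recursions from the monotonicity proof and split on the position of $i$ relative to $\{z_1,z_2\}$.

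When $i\notin\{z_1,z_2\}$, the same row is used for all four sets — $\wsig_{i\cdot}$ if $i\in\sub$, or $\nosig_{i\cdot}$ if $i\notin\sub$ — and the additive constant $\wsig_{i\sigma}$ (present only in the $i\in\sub$ case) enters with signed coefficient $+1+1-1-1=0$, hence cancels. Therefore $\Delta_i^\kk = \sum_j \wsig_{ij}\,\Delta_j^\k$ (resp. $\sum_j \nosig_{ij}\,\Delta_j^\k$), which is nonnegative by the induction hypothesis $\Delta_j^\k\ge 0$ and nonnegativity of the transition probabilities. This handles both the $i\in\sub$ and $i\notin\sub\cup\{z_1,z_2\}$ subcases at once.

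The delicate case is $i\in\{z_1,z_2\}$, say $i=z_1$ (the other is symmetric). Now $i$ is active in $\sub_1$ and $\sub_{12}$ but inactive in $\sub$ and $\sub_2$, so two different rows enter. Substituting the recursions and cancelling the constants $\wsig_{z_1\sigma}$ gives
\[
\Delta_i^\kk = \sum_{j} \wsig_{ij}\big(c_j^\k(\sub_1) - c_j^\k(\sub_{12})\big) + \sum_j \nosig_{ij}\big(c_j^\k(\sub_2) - c_j^\k(\sub)\big).
\]
Setting $N_j = c_j^\k(\sub_{12}) - c_j^\k(\sub_1)$ and $P_j = c_j^\k(\sub_2) - c_j^\k(\sub)$, both are nonnegative by the already-established monotonicity (as $\sub_1\subseteq\sub_{12}$ and $\sub\subseteq\sub_2$), and the induction hypothesis $\Delta_j^\k\ge0$ rearranges precisely to $P_j \ge N_j$. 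Combined with the modeling assumption $\nosig_{ij}\ge\wsig_{ij}\ge 0$ from \eqref{eq:ge_assumption}, this yields the termwise chain $\nosig_{ij}P_j \ge \wsig_{ij}P_j \ge \wsig_{ij}N_j$, so each summand $\nosig_{ij}P_j - \wsig_{ij}N_j$ is nonnegative and $\Delta_i^\kk = \sum_j(\nosig_{ij}P_j - \wsig_{ij}N_j)\ge 0$.

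I expect this last case to be the main obstacle: unlike in monotonicity, the two sums cannot be bounded separately, since the first contributes a nonpositive amount. The argument must fuse three ingredients simultaneously — monotonicity to sign $N_j,P_j$, the inductive submodularity to order $P_j\ge N_j$, and the probability-dominance assumption $\nosig_{ij}\ge\wsig_{ij}$ to let the positive contribution absorb the negative one. Finally, letting $k\to\infty$ (so $c_i^\k\to c_i$ pointwise, preserving the defect inequality in the limit) and then taking the expectation over $\piv$, exactly as in Corollary~\ref{cor:non_dec}, transfers submodularity from each $c_i^\k$ to $c_i$ and hence to $f(\sub)=\inner{\piv,\cc(\sub)}$.
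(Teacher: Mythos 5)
Your proof is correct and takes essentially the same route as the paper's: induction on $k$, the same case split on the position of $i$ relative to $\{z_1,z_2\}$, and in the critical case $i\in\{z_1,z_2\}$ the same fusion of monotonicity, the inductive submodularity hypothesis, and the dominance assumption of \eqref{eq:ge_assumption}. The only cosmetic difference is that the paper parameterizes $\nosig_{ij}=(1+\alpha)\wsig_{ij}$ with $\alpha\ge 0$ where you use the equivalent termwise chain $\nosig_{ij}P_j\ge\wsig_{ij}P_j\ge\wsig_{ij}N_j$.
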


\begin{proof}
We prove by induction on $k$. 
The case for $k=0$ is trivial since $\piv$ is independent of $\sub$ and hence $c_i^0$ is modular.
Assume now that the claim holds for some $k\ge0$. For brevity we define:
\begin{align*}
c_{i}^\k = c_i^\k(\sub), \quad
c_{i,1}^\k = c_i^\k(\sub \cup \{z_1\}), \quad
c_{i,2}^\k = c_i^\k(\sub \cup \{z_2\}), \quad
c_{i,12}^\k = c_i^\k(\sub \cup \{z_1,z_2\})
\end{align*}
For any $T \subseteq [n]$, we have:
\be
c_i^{(k+1)}(T) = \sum_{j=1}^n \rho_{ij}(T) c_j^{(k)}(T) + \rho_{i \sigma} \1{i \in T}
\label{eq:csumt}
\ee
We'd like to show that
$c_{i,12}^\kk+  c_i^\kk \leq c_{i,1}^\kk +  c_{i,2}^\kk$. For every $j \in [n]$, we'll prove that:
\be
\rho_{ij}(\sub \cup \{z_1, z_2\}) c_{j,12}^\k + 
\rho_{ij}(\sub) c_{j}^\k  \le
\rho_{ij}(\sub \cup \{z_1\}) c_{j,1}^\k +
\rho_{ij}(\sub \cup \{z_2\}) c_{j,2}^\k 
\label{eq:submod_ineq_j}
\ee
%
%
%
which together with \eqref{eq:csumt} and subtracting $\rho_{i \sigma} \1{i \in T}$ from both sides
will conclude our proof. We separate into different cases for $i$.
If $i \in S$, then we have $\rho_{ij}( \sub \cup \{z_1,z_2\}) = \rho_{ij}( \sub \cup \{z_1\})   = \rho_{ij}( \sub \cup \{z_2\})  = \rho_{ij}( \sub) = \wsig_{ij}$. 
Similarly, if $i \notin S\cup \{z_1,z_2\}$, 
then all terms now equal $\nosig_{ij}$.
  Therefore by the inductive assumption \eqref{eq:submod_ineq_j} follows.
Assume $i=z_1=z$ (and analogously $i=z_2$).
From the assumption in \eqref{eq:ge_assumption}
we can write $\nosig_{ij} = (1+\alpha) \wsig_{ij}$ for some $\alpha\geq 0$.
Then \eqref{eq:submod_ineq_j} becomes:
\be
\wsig_{ij} c_{j,12}^\k + (1+\alpha)\wsig_{ij} c_{j}^\k \le
\wsig_{ij}  c_{j,1}^\k +  (1+\alpha)\wsig_{ij} c_{j,2}^\k 
\ee
Reorder to get:
\be
c_{j,1}^\k + c_{j,2}^k  - c_{j,12}^\k - c_{j}^\k + \alpha (c_{j,2}^{k} - c_{j}^\k) \geq 0 
\ee
This indeed holds since the first four terms are non-negative from the inductive assumption, and the last term is non-negative
because of monotonicity and $\alpha\geq 0 $.
\end{proof}

\begin{corollary}
\label{cor:submod}
$\forall \,i \in [n]$, $c_i(\sub)$ is submodular, hence $f(\sub)=\inner{\piv,\cc(\sub)}$ is submodular.
\end{corollary}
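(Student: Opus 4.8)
The plan is to obtain this corollary directly from the preceding theorem, which already establishes that for every finite horizon $k$ and every $i \in [n]$, the function $c_i^\k(\sub)$ is submodular. The essential point is that submodularity is captured by a finite family of \emph{weak} inequalities—one inequality $c_i^\k(\sub \cup \{z_1\}) + c_i^\k(\sub \cup \{z_2\}) \ge c_i^\k(\sub \cup \{z_1,z_2\}) + c_i^\k(\sub)$ for each admissible choice of $\sub$ and $z_1,z_2 \notin \sub$—and that each such inequality relates bounded real numbers. Since $c_i(\sub) = \lim_{k\to\infty} c_i^\k(\sub)$ by the definition in \eqref{eq:cis}, and since this limit exists for every fixed $\sub$ (the sequence $c_i^\k(\sub)$ is non-decreasing in $k$, being a probability of reaching $\sigma$ within a growing time budget, and is bounded above by $1$), I would simply pass to the limit $k \to \infty$ in each of these inequalities. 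Because non-strict inequalities are preserved under pointwise limits, every defining submodularity inequality for $c_i^\k$ carries over verbatim to $c_i$, so $c_i(\sub)$ is submodular for each $i \in [n]$.

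With submodularity of each coordinate $c_i$ in hand, I would conclude submodularity of the objective by linearity. Writing $f(\sub) = \inner{\piv,\cc(\sub)} = \sum_{i=1}^n \pi_i\, c_i(\sub)$, this is a non-negative linear combination of the submodular functions $c_i$, since $\piv$ is a probability distribution and hence $\pi_i \ge 0$ for all $i$. The class of submodular set functions is closed under non-negative linear combinations—summing the per-coordinate submodularity inequalities with the weights $\pi_i \ge 0$ yields the submodularity inequality for $f$—so $f$ is submodular, as claimed.

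The only step meriting genuine care is the interchange of the limit with the submodularity inequality, and this is precisely where the monotonicity result (the first theorem and \thmref{cor:non_dec}) does the work for me: it guarantees that each sequence $c_i^\k(\sub)$ converges, so the four terms in each inequality have well-defined finite limits and the inequality is inherited by the limit. No uniform convergence or dominated-convergence argument is needed, since the limit is taken separately for each of the finitely many fixed configurations $(\sub,z_1,z_2)$. I therefore do not expect a substantive obstacle here; the content of the corollary lies entirely in the finite-horizon theorem, and this argument is the routine limiting closure.
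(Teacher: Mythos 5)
Your proof is correct and takes essentially the same route the paper intends: the paper states this corollary without an explicit proof, and the implicit justification is exactly your two steps---the finite-horizon submodularity inequalities for $c_i^\k$ pass to the pointwise limit $c_i(\sub) = \lim_{k\to\infty} c_i^\k(\sub)$, and $f(\sub)=\sum_{i} \pi_i\, c_i(\sub)$ is then submodular as a non-negative combination. One small correction: the existence of the limit follows from the nested-events argument you give (the sequence $c_i^\k(\sub)$ is non-decreasing in $k$ and bounded by $1$), and is in any case built into the paper's definition in \eqref{eq:cis}; it does \emph{not} follow from the paper's monotonicity theorem or Corollary~\ref{cor:non_dec}, which concern monotonicity in $\sub$ rather than in $k$.
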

%


\section{Optimization}
\label{sec:optimization}
Maximizing submodular functions is hard in general.
However, a classic result by Nemhauser \cite{nemhauser1978analysis} shows that a
non-decreasing submodular set function, such as our $f(S)$, can be efficiently
optimized via a simple greedy algorithm,
with a guaranteed $(1-\frac{1}{e})$-approximation of the optimum.
The greedy algorithm initializes $\sub = \emptyset$, and then sequentially adds elements to $\sub$.
For a given $\sub$, the algorithm iterates over all $z \in [n] \setminus \sub$
and computes $f(\sub \cup \{z\})$.
Then, it adds the highest scoring $z$ to $\sub$, and continues to the next step.
We now discuss its implementation for our problem.

\begin{algorithm}[t]
\caption{}
\label{array-sum}
\begin{algorithmic}[1]
\Function{SimpleGreedyTagOpt}{$\Wsig,\Nosig,\piv,k$}
		 \Comment{See supp. for efficient implementation}
  \State \text{Initialize } $\sub = \emptyset$
  \For {$i \gets 1$ \text{ to } $k$}
    \For {$z \in [n] \setminus \sub$} 
            \State $\cc = \big(I - A(\sub \cup \{z\}) \big) \setminus  \, \bb(\sub \cup \{z\}) $
	            \Comment{$A,\bb$ are set using Eqs. (\ref{eq:rho_q}), (\ref{eq:rho_AB})}  
      \State $v(z) = \inner{\piv,\cc}$
    \EndFor
    \State $\sub \gets \sub \cup \argmax_z v(z)$  
  \EndFor
  \State Return $\sub$
\EndFunction
\end{algorithmic}
\label{algo:greedy}
\end{algorithm}

Computing $f(\sub)$ for a given $\sub$ reduces to solving a set of linear equations.
For a Markov chain with transient states $\{1,\dots,n-r\}$ and absorbing states $\{n-1+1,\dots,n+1=\sigma\}$,
the transition matrix $\rho(\sub)$ can be written as:
  \begin{equation}
\rho(\sub) = \left( \begin{array}{cc}
A(\sub) & B(\sub)\\
\bm{0} & I
\end{array} \right)
\label{eq:rho_AB}
\end{equation}
where $A(\sub)$ are the transition probabilities between transient states,
$B(\sub)$ are the transition probabilities from transient states to absorbing states,
and $I$ is the identity matrix. 
When clear from context we will drop the dependence of $A,B$ on $\sub$. 
Note that $\rho(\sub)$ has at least one absorbing state (namely $\sigma$).
We denote by $\bb$ the column of $B$ corresponding to state $\sigma$ (i.e., $B$'s rightmost column).	

We would like to calculate $f(\sub)$.
From \eqref{eq:cis},
the probability of reaching $\sigma$ given initial state $i$ is:
\ignore{
The probability of a random walk starting in a transient state $i$ and ending in
an absorbing state $k$ is $C_{ik}$, where:
\[
C=(I-A)^{-1}B
\]
}
\begin{align*}
c_i(S) = 
\sum_{t=0}^\infty \sum_{j \in [n-r]}
\probargsub{X_t=\sigma | X_{t-1} = j}{S}
\probargsub{X_{t-1}=j | X_{0} = i}{S} 
= \left( \sum_{t=0}^\infty A^t \bb \right)_{i}
\end{align*}
The above series has a closed form solution:
\be
\sum_{t=0}^\infty A^t = (I-A)^{-1}
\quad \Rightarrow \quad
\cc(S) = (I-A)^{-1} \bb \nonumber
\ee
Thus, $\cc(S)$ is the solution of the set of linear equations, whose solution readily gives us $f(\sub)$:
\be
f(S) = \inner{\piv,\cc(S)}
\quad \text{s.t.} \quad
(I-A) \cc(S) = \bb
\label{eq:f_objective}
\ee
\ignore{
As we are interested in the probability of reaching $\sigma$, it suffices to focus on:
\[
c=(I-A)^{-1} b, \qquad b_i = B_{i \sigma} \,\, \forall i
\]
Note that due to the normalization constraints, $A$ and $b$ (and hence $c$) are in fact
functions of $\sub$.
Finally, to compute $f(\sub)$, we solve the set of linear equations
and take an inner product with $\piv$:

\begin{equation}
f(\sub) = \inner{\piv,\cc(S)} \quad \text{s.t.} \quad  (I-A(\sub))c = b(\sub)
\label{eq:f_objective}
\end{equation}
}

The greedy algorithm can thus be implemented by 
sequentially considering candidate sets $S$ of increasing size,
and for each $z$ calculating $f(\sub \cup \{z\})$ 
by solving a set of linear equations (see Algo. \ref{algo:greedy}).
Though parallelizable, this na\"{\i}ve implementation may be quite costly as it requires solving $O(n^2)$ sets of $n-r$ linear equations,
 one for every addition of $z$ to $\sub$.
Fast submodular solvers such as CELF++ \cite{goyal2011celf++}
can reduce the number of calls to $f(\sub)$ by an order of magnitude.
As we now show, a significant speedup in the computation of $f(\sub)$ itself can be achieved using problem's structure.

A standard method for solving the set of linear equations $(I-A)\cc=\bb$ if to first
compute an $LUP$ decomposition for $(I-A)$, namely find lower and upper diagonal matrices $L,U$
and a permutation matrix $P$ such that
$LU = P(I-A)$. Then, if suffices to solve $Ly=Pb$ and $Uc=y$.
Since $L$ and $U$ are diagonal, solving these equations can be performed efficiently;
the costly operation is computing the decomposition in the first place.
Recall that $\rho(\sub)$ is composed of rows from $\Wsig$ corresponding to $\sub$
and rows from $\Nosig$ corresponding to $[n] \setminus \sub$.
This means that $\rho(\sub)$ and $\rho(\sub \cup \{z\})$ differ only in one row,
or equivalently, that $\rho(\sub \cup \{z\})$ can be obtained from $\rho(\sub)$
by adding a rank-1 matrix.

With this in mind, given an $LUP$ decomposition of $\rho(\sub)$, we can efficiently compute
$f(\sub \cup \{z\})$ (and its corresponding decomposition)
using rank-1-update techniques such as Bartels-Golub-Reid \cite{reid1982sparsity} or others.
Such methods are especially efficient for sparse matrices.
As a result,
it suffices to compute
only a \emph{single} $LUP$ decomposition for the input at the beginning,
and perform cheap updates at every step.
See the supplementary material for an efficient implementation.

\ignore{
\subsection{Efficient Implementation}
Algorithm \ref{algo:greedy} presents a na\"{\i}ve implementation of the greedy approach to submodular maximization, in which
$f(\sub)$ needs to be computed $O(nk)$ times.
A simple way achieve considerable speedup is via parallelization,
where for a given set $\sub$, evaluations
of $f(\sub \cup \{z\})$ for all $z \in [n] \setminus \sub$
are performed in parallel.

Parallelization can reduce running time, but still requires $O(nk)$ evaluations
of $f(\sub)$.
The literature contains several methods where the submodularity of $f$ is used to achieve a considerable reduction in the number of computations.
For instance, the CELF \cite{leskovec2007cost} algorithm for submodular
maximization maintains a list of lazy evaluations of the marginal gains
$f(\sub \cup \{z\}) - f(\sub)$, sorted by non-decreasing values.
The submodularity of $f$ ensures that marginal gains can only decrease
as $\sub$ grows. Hence, a lazy evaluation serves as an upper bound
on the true marginal gain.
The algorithm repeatedly updates the marginal gain of the state at the top
of the list. If after the update the state maintains its rank,
then it is guaranteed to be the correct greedy choice.
Otherwise, the list is resorted.
In our experiments, using CELF resulted in roughly $2n'$ evaluations
of $f(\sub)$ on average for $k=25$, where $n'<n$ is the number
of states with non-zero values in $\bb$.
Other pupolar methods include CELF++ \cite{goyal2011celf++} and UBLF \cite{zhou2013ublf}.

A straightforward way to compute $f(\sub)$ is to solve the set
of linear equations in Eq. \ref{eq:lin_eqs}.
Nonetheless, the total number of computations can be significantly reduced
due to the fact that
for any $\sub$ and $z \not \in \sub$,
the matrices $A(\sub)$ and $A(\sub \cup \{z\})$ 
differ only in one row.
\ignore{
\red{Specifically, the $z^{\text{th}}$ row of $A(\sub \cup \{z\})$ is just
the corresponding row in $A(\sub)$ multiplied by $(1-b_z)$,
and while $(b(\sub))_z = 0$, $(b(\sub \cup \{z\}))_z = \rho_{z \sigma}$.}
}

A standard method for solving the set of linear equations $(I-A)\cc=\bb$ if to first
compute an $LUP$ decomposition for $(I-A)$, namely find lower and upper diagonal matrices $L,U$
and a permutation matrix $P$ such that
$LU = P(I-A)$. Then, if suffices to solve $Ly=Pb$ and $Uc=y$.
Since $L$ and $U$ are diagonal, solving these equations can be performed
efficiently with forward and backward substitution;
the costly operation is computing the decomposition in the first place.

Since $A(\sub)$ and $A(\sub \cup \{z\})$ differ only in the $z^{\text{th}}$ row,
so do $(I-A(\sub))$ and $(I-A(\sub \cup \{z\}))$.
As we show next, given an $LUP$ decomposition of $(I-A(\sub))$, a decomposition
of $(I-A(\sub \cup \{z\}))$ can be computed efficiently.

To see why, note that:
\begin{align*}
(LU)_{\Pi_i \cdotp} =
 P_{\Pi_i \cdotp}(I-A(\sub)) =
\begin{cases}
\ev_i - \Nosig_{i \cdotp} &\mbox{if } i \neq z \\ 
\ev_i - \Wsig_{i \cdotp} &\mbox{if } i = z 
\end{cases} 
\end{align*}
where $\Pi$ is the permutation given by $P$, and $\ev_i$ is an indicator vector.
A similar expression for $\sub \cup \{z\}$ differs only in the $\Pi_z^{\text{th}}$ row:
\begin{align*}
P_{\Pi_z \cdotp}(I-A(\sub \cup \{z\})) &=
\ev_z - \Wsig_{z \cdotp} \\
&= \ev_z - \Nosig_{z \cdotp} + (\Nosig_{z \cdotp} - \Wsig_{z \cdotp})  \\
&= P_{\Pi_z \cdotp}\left[ (I-A(\sub)) +\ev_z (\Nosig_{z \cdotp} - \Wsig_{z \cdotp})^\top \right]
\end{align*}
This offers a new decomposition:
\begin{align*}
L'U = P (I-A(\sub \cup \{z\})), \quad
L' = L + \ev_z (\boldsymbol{\ell}^\top - L_{\Pi_z \cdot})
\end{align*}
Where $\boldsymbol{\ell}$ is the solution to
$\boldsymbol{\ell}^\top U = \Nosig_{z \cdot}$.
Since $L'$ is obtained from $L$ by an addition of a rank-1 matrix,
the LUP decomposition of $(I-A(\sub \cup \{z\}))$ can by computed
efficiently \todo{numbers!}
using rank-1 update methods such as Bartels-Golub-Reid \cite{reid1982sparsity}
or Forrest-Tomlin \cite{forrest1972updated}.
These methods build on the fact that while $L'$ is not triangular,
it can be efficiently transformed into
a lower triangular matrix by a small number of row or column swaps.\footnote{
The efficient LUSOL package for sparse rank-1 $LU$ modifications
can be found at \url{https://web.stanford.edu/group/SOL/software/lusol/}.}
The above methods are especially efficient
for sparse matrices.

As a result,
in both the na\"{i}ve implementation and in CELF,
it suffices to compute
a single $LUP$ decomposition for the input at the beginning,
and perform cheap updates at every step.
} 

\section{Optimal Tagging}
\label{sec:optimal_tagging}


In this section we return to the task of optimal tagging
and show how the Markov chain optimization framework described
above can be applied.
We use a random surfer model, where a browsing user hops between
items and tags in a bipartite Markov chain.
In its explicit form, our model captures the activity of browsing users whom,
when viewing an item,
are presented with the item's tags and may choose to click on them
(and similarly when viewing tags).

In reality, many systems also (or mainly) include direct links between related items,
often in the form of a ranked list of item recommendations.
The relatedness of two items is often, at least to some extent,
based on the set of mutual tags.
Our model captures this notion of similarity by the (implicit)
transitions to and from tag states.
This allows us to encode tags as variables in the objective.
As our results apply to general Markov chains,
adding direct transitions between items is straightforward.
\blue{
We further elaborate on this point of view in relation to parameter estimation in XXX.
}
Our framework easily incorporates click models \cite{chuklin2015click},
in which adding an item to a rank list modifies the clicking probabilities.
Note that in contrast to models for tag recommendations, we do not need to explicitly model the system's users,
as our setup defines only one distinct optimization task per item.

\ignore{
In many systems, when a browsing user queries a tag or searches for a keyword,
she is presented with a list of items ranked by their popularity, relevance,
or their trending status.
When an item is tagged, it is included in this list, and can potentially
effect the rank of other items.
With respect to \eqref{eq:ge_assumption},
our only assumption here is that while adding a tag $\tg_j$ to $\sub$
increases $\rr_j$, each entry in $R_{j \cdot}$ can either
decrease or remain unchanged
(and that $\rho_{j \cdot}$ must remain row-normalized).
This condition holds for simple renormalization,
as well as for many web search click models
\cite{chuklin2015click}.
}

In what follows we formalize the above notions.
Consider a system of $m$ items $\itms=\{\itm_1,\dots,\itm_m\}$
and $n$ tags $\tgs=\{\tg_1,\dots,\tg_n\}$.
Each item $\itm_i$ has a set of tags $\tgs_i \subseteq \tgs$,
and each tag $\tg_j$ has a set of items $\itms_j \subseteq \itms$.
The items and tags constitute the states of a bipartite Markov chain,
where users hop between items and tags.
Specifically, the transition matrix $\rho$ can have non-zero
entries $\rho_{ij}$ and $\rho_{ji}$ for items $\itm_i$ tagged by $\tg_j$.
To model the fact that browsing users eventually leave the system,
we add a global absorbing state $\varnothing$ and add transition probabilities
$\rho_{i \varnothing}=\epsilon_i>0$ for all items $\itm_i$.
For simplicity we assume that $\epsilon_i=\epsilon$ for all $i$,
and that $\piv$ can be non-zero only for tag states.

In our setting, when a new item $\sigma$ is uploaded,
the uploader may choose a set $\sub \subseteq \tgs$ of at most $k$ tags for $\sigma$.
Her goal is to choose $\sub$ such that the probability
of an arbitrary browsing user reaching (or equivalently, being absorbed in)
$\sigma$ while browsing the system is maximal.
As in the general case, the choice of $\sub$
effects the transition matrix $\rho(\sub)$.

Denote by $P_{ij}$ the transition probability
from item $\itm_i$ to tag $\tg_j$,
by $R_{ji}(\sub)$ the transition probability from $\tg_j$ to $\itm_i$ under $\sub$,
and let $\rr_j(\sub) = R_{j \sigma}(\sub)$.
Using \eqref{eq:rho_AB}, $\rho$ can be written as:
\be 
\rho(\sub) = \left( \begin{array}{cc}
A & B\\
\bm{0} & I_2
\end{array} \right), \quad \,\,
A= \left( \begin{array}{cc}
\bm{0} & R(\sub)\\
P & \bm{0}
\end{array} \right), \quad \,\,
B = \left( \begin{array}{cc}
\bm{0} & \rr(\sub) \\
\bm{1} \cdotp \epsilon & \bm{0}
\end{array} \right), \quad \,\,
I_2 = \left( \begin{array}{cc}
1 & 0\\
0 & 1
\end{array} \right) \nonumber
\ee 
where $\bm{0}$ and $\bm{1}$ are appropriately sized vectors or matrices.
Since the graph is bipartite,
since we assume the walks start at a tag state,
and since we are interested only in choosing tags,
the Markov chain can be ``folded'' to include only the tag-states and
the absorbing states.
Looking at $\rho^2(\sub)$,
the transition probabilities between tags
are now given by the matrix $R(\sub)P$,
while the transition probabilities from tags to $\sigma$ remain $\rr(\sub)$.
Our objective of the probability of reaching $\sigma$ under $\sub$ is:
\be
f(\sub) = \inner{\piv,\cc(\sub)}  \quad \text{s.t.} \quad
\left( I-R(\sub)P \right) \cc(\sub) = \rr(\sub)
\ee
which is a special case of the general objective presented in \eqref{eq:f_objective},
and hence can be optimized efficiently.
In the supplementary material we prove that this special case is still NP-hard.



\ignore{
As for the computational hardness of optimal tagging,
a reduction from vertex cover similar to that in
\secref{sec:hardness} can be constructed for the above special case.
Given a graph $G=(V,E)$,
we create a bipartite Markov chain with a state $\tg_i$ for every node $i \in V$
and a state $\itm_{ij}$ for every edge $(i,j) \in E$,
and add two absorbing states $\varnothing$ and $\sigma$.
We set the transition probabilities in $\nosig$ to be $\frac{1}{d_i}$
from $\tg_i$ to any $\itm_{ij}$, $(1-\epsilon)$ from $\itm_{ij}$ to $\tg_j$,
and $\epsilon$ from $\itm_{ij}$ to $\varnothing$,
and continue as in the general case.
}


\section{Experiments}
\label{sec:experiments}

\begin{figure*}[t]
  \begin{center}
  \includegraphics[width=\textwidth]{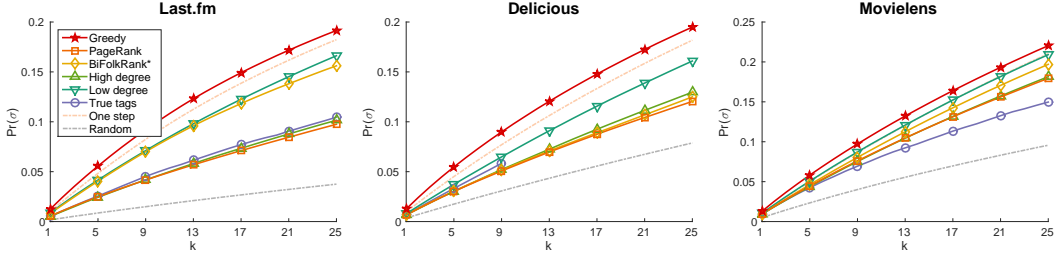}
\end{center}
\caption{The probability of reaching a focal item $\sigma$
under a budget of $k$ tags for various methods.}
\label{fig:results}
\end{figure*}

To demonstrate the effectiveness of our approach,
we perform experiments on optimal tagging
in data collected from Last.fm, Delicious, and Movielens.
taken from the HetRec 2011 workshop \cite{Cantador:RecSys2011}.
The datasets include all items (between 10,197 and 59,226) and tags (between 11,946 and 53,388) reached from crawling
a set of about 2,000 users in each system, as well as some metadata.


For each dataset, we first created a bipartite graph of items and tags.
Next, we generated 100 different instances of our problem per dataset by expanding
each of the 100 highest-degree tags
and creating a Markov chain for their items and all their associate tags.
We discarded nodes with less than 10 edges.

To create an interesting tag selection setup,
for each item in each instance we augmented its true tags
with up to 100 similar tags (based on \cite{sigurbjornsson2008flickr}).
These served as the set of candidate tags for that item.
We focused on items which were ranked first in at least
10 of their 100 candidate tags,
giving a total of 18,167 focal items for comparison.
For each such item, our task was to choose the $k$ tags
out of the 100 candidate tags
which maximize the probability of reaching the focal item.

Transition probabilities from tags to items were set to be proportional
to the item weights - number of listens for artists in Last.fm,
tag counts for bookmarks in Delicious, and averaged ratings for movies in Movielens.
As the datasets do not include weights for tags, we used 
uniform transition probabilities from items to tags.
The initial distribution was set to be uniform over the set
of candidate tags,
and the transition probability from items to
the absorbing state $\varnothing$ was set to $\epsilon=0.1$.



We compared the performance of our greedy algorithm
with several baselines.
Random-walk based methods included PageRank
and a variant\footnote{To apply the method to our setting, we used a uniform prior over user-tag relations.}
 of BiFolkRank \cite{kim2011personalized}, a state-of-the-art tag recommendation method that operates on item-tag relations.
Heuristics included taking the $k$ tags with highest degree, lowest degree,
the true labels (for relevant $k$-s), and random.
To measure the added value of taking into account
long random walks, we also display the probability of reaching $\sigma$
in one step.

Results for all three datasets are provided in Figure \ref{fig:results},
which shows the average probability of reaching the focal item
for values of $k \in \{1,\dots,25\}$.
As can be seen, the greedy method clearly outperforms other baselines.
In our setup choosing low degree tags outperforms both random-walk based methods and heuristics.
Considering paths of all lengths improves results by a considerable 20-30\% for $k=1$,
and roughly 5\% for $k=25$.
%
\ignore{
Figure \ref{fig:ratio} shows the increase in the probability of reaching
$\sigma$ when all path lengths are taken into account in the greedy algorithm,
relative to its first-order approximation,
namely choosing the $k$ tags with the highest probability
of reaching $\sigma$ in one step.
For the initial selection ($k=1$), the increase is rather dramatic
at 20-30\% across datasets.
As $k$ grows, the added effect decreases down to roughly 5\% for $k=25$.
}
An interesting observation is that the performance of the true tags
is rather poor.
A plausible explanation for this is that the data we use
are taken from collaborative tagging systems,
where items can be tagged by any user. 
In such systems, tags typically play a categorical or hierarchical role,
and as such are probably not optimal for promoting item popularity.
\blue{
An analysis of a single test case which qualitatively compares the true tags
and those suggested by the greedy approach is presented in the supplementary material.
}

\ignore{
Figure \ref{fig:greedy_vs_true} displays a test-case analysis
of the tags selected for a focal item by the greedy algorithm,
and the item's true tags ordered by their weight.
In this example, the greedy algorithm chose several categorical tags
(such as ``groove'' and ``synth pop'')
that appear in the set of true tags as well.
The order of these tags in the greedy selection is however opposite
from that of the true tags, suggesting that
being well connected to less popular tags may be more beneficial.
The greedy algorithm also selected several cross-categorical tags
such as ``makes me happy'' and ``one of my favorites''.
While probably not very popular on their own,
due to their wide degree of connectivity they are likely candidates.
}



\section{Conclusions}
\label{sec:conclusions}

In this paper we introduced the problem of optimal tagging,
along with the general problem of optimizing probability mass in Markov chains, by adding links.
We proved that the problem is NP-hard,
but can be $(1-\frac{1}{e})$-approximated
due to the submodularity and monotonicity of the objective.
Our efficient greedy algorithm
can be used in practice for choosing optimal tags
or keywords in various domains.
Our experimental results show that simple heuristics and PageRank variants underperform
our disciplined approach, and na\"{\i}vely selecting the true tags
is typically suboptimal.

In our work we assumed access to the transition probabilities
between tags and items and vice versa.
While the transition probabilities for existing items
can be easily estimated by a system's operator,
estimating the probabilities from tags to \emph{new} items
is non-trivial.
This is an interesting problem to pursue.
Even so, users do not typically have access to 
the information required for estimation.
Our results suggest that 
users can simply apply the greedy steps sequentially via trial-and-error.

Finally,
as our task is of a counterfactual nature,
it is hard to draw conclusions from the experiments as to the effectiveness of our method in real settings.
It would be interesting to test it in realty, and compare it 
to strategies used by both lay users and experts. 
Especially interesting in this context are competitive domains such as ad placements
and viral marketing. We leave this for future research.

\bibliographystyle{plain}
\bibliography{tagopt_NIPS16}

\end{document}